\newtheorem{thm}{Theorem}
\newtheorem{lem}[thm]{Lemma}
\newtheorem{defn}{Definition}
\theoremstyle{remark}
\newtheorem{rem}{Remark}
\newcommand{\mb}{\mathbf}
\DeclareMathOperator*{\Cov}{Cov}
\DeclareMathOperator*{\Tr}{Tr}
\newcommand{\uh}{\hat{U}}
\newcommand{\bsigma}{\mb{\Sigma}}
\begin{document}
\title{
Brascamp-Lieb Inequality and Its Reverse: \\An Information Theoretic View \thanks{This work was supported in part by  NSF Grants CCF-1528132,
CCF-0939370 (Center for Science of Information),
CCF-1319299,
CCF-1319304,
CCF-1350595 and AFOSR FA9550-15-1-0180.}
}%

\author{
\IEEEauthorblockN{Jingbo Liu\IEEEauthorrefmark{1}, Thomas~A.~Courtade\IEEEauthorrefmark{2}, Paul Cuff\IEEEauthorrefmark{1} and Sergio Verd\'{u}\IEEEauthorrefmark{1}}\\
\IEEEauthorblockA{ \IEEEauthorrefmark{1}Department of Electrical Engineering, Princeton University\\
 \IEEEauthorrefmark{2}Department of Electrical Engineering and Computer Sciences, University of California, Berkeley \\
 Email:  \{{jingbo,cuff,verdu}\}@princeton.edu, courtade@eecs.berkeley.edu
           \vspace{-2ex}   }
}

%

\maketitle

\begin{abstract}
We generalize a result by Carlen and Cordero-Erausquin on the equivalence between the Brascamp-Lieb inequality and the subadditivity of relative entropy
by allowing for random transformations (a broadcast channel).
This leads to a unified perspective on several functional inequalities that have been gaining popularity in the context of proving  impossibility results.
We demonstrate that the information theoretic dual of the Brascamp-Lieb inequality is a convenient setting for proving properties such as data processing, tensorization, convexity and Gaussian optimality.  Consequences of the latter include an extension of the Brascamp-Lieb inequality allowing for Gaussian random transformations, the determination of the multivariate Wyner common information for Gaussian sources, and a multivariate version of Nelson's hypercontractivity theorem.
Finally we present an information theoretic characterization of a reverse Brascamp-Lieb inequality involving a random transformation (a multiple access channel).
\end{abstract}

\section{Introduction}\label{sec1}
The Brascamp-Lieb (BL) inequality \cite{brascamp1976best}\cite{lieb1990gaussian}\cite{bennett2008brascamp} in functional analysis
concerns the optimality of Gaussian functions in a certain class of integral inequalities.
To be concrete, consider an inequality of the following general form,
which we shall call a \emph{Brascamp-Lieb like} (BLL) inequality:
\begin{align}
\mathbb{E}\left[\prod_{j=1}^m f_j(B_j(X))
\right]
\le D\prod_{j=1}^m \|f_j\|_{\frac{1}{c_j}},
\label{e_carlen}
\end{align}
where the expectation is with respect to $X\sim Q$, for each $j\in\{1,\dots,m\}$, $c_j\in(0,\infty)$, $B_j\colon \mathcal{X}\to\mathcal{Y}_j$ is measurable, $f_j\colon \mathcal{Y}_j\to \mathbb{R}$ is nonnegative measurable, and
$$\|f_j\|_{\frac{1}{c_j}}:=\left(\int f_j^{1/c_j}{\rm d}R_{Y_j}\right)^{c_j}$$ for some $R_{Y_j}$.
Conventionally, the BL inequality is the Gaussian case of \eqref{e_carlen} where $Q$ is Gaussian (or the Lebesgue measure, with the expectations replaced by integrals)
and $(B_j)_{j=1}^m$ are linear projections.
In this setting, Brascamp and Lieb \cite{brascamp1976best} showed that \eqref{e_carlen} holds if and only if it holds for all centered Gaussian functions $(f_j)_{j=1}^m$.
Generalizing a result in \cite{brascamp1976best}, Lieb \cite{lieb1990gaussian} extended the validity of the result to arbitrary surjective linear maps $(B_j)$.
Lieb's proof
used a rotational invariance property of Gaussian random variables.
Given the fundamental nature of Lieb's result and its far-reaching consequences, alternative proof methods have attracted wide interest; see \cite[Remark~1.10]{bennett2008brascamp} for the history and references.

Motivated by the quest for an alternative proof of Lieb's result using the superadditivity of Fisher information,
Carlen and Cordero-Erausquin proved a duality between the BLL inequality in \eqref{e_carlen} and a super-additivity of relative entropy.
We remind the reader that the BLL inequality is more general than the setting initially considered by Brascamp and Lieb \cite{brascamp1976best} since the random variables need not be Gaussian and $(B_j)$ need not be linear.


In this paper, we extend the duality result of Carlen and Cordero-Erausquin, by allowing $(B_j)$ to be non-deterministic random transformations (that is, there is a broadcast channel). This is motivated by two considerations.
First, random transformations are natural and essential for many information theoretic applications; see for example \cite{ISIT_lccv_smooth2016}. Second, the result subsumes the equivalent formulation of the strong data processing inequality (in addition to that of hypercontractivity and Loomis-Whitney inequality/Shearer's lemma, which is already contained in Carlen and Cordero-Erausquin's result).
These inequalities have recently attracted significant attention in information theory
\cite{courtade2013outer}\cite{pw_2015}\cite{liu2015key}\cite{xu15},
theoretical computer science
\cite{ganor2014exponential}\cite{braverman2015communication}
and statistics
\cite{mossel2010noise}\cite{duchi13}.
Previous proofs of their equivalent formulations have been discovered independently and sometimes rely on the finiteness of the alphabet.
In contrast, the present approach is based on the nonnegativity of relative entropy (which corresponds to the Donsker-Varadhan formula used by Carlen and Cordero-Erausquin)
and holds for general alphabets.

In the same vein as Brascamp and Lieb's original result that Gaussian kernels have Gaussian maximizers, we establish Gaussian optimality in several information theoretic optimization problems related to the dual form of the BL inequality.
Roughly speaking, our approach is based on the fact that two independent random variables are both Gaussian if their sum is independent of their difference, a fact which was also used in establishing Gaussian extremality in information theory by Geng-Nair \cite{geng2014yanlin} \cite{nairextremal}.
It is worth noting that similar techniques have appeared previously in the literature on the Brascamp-Lieb inequality:  Lieb used a rotational invariance argument in \cite{lieb1990gaussian}, and it was observed that convolution preserves the extremizers of Brascamp-Lieb inequality \cite[Lemma~2]{barthe1998optimal}.
However, as keenly noted in \cite{geng2014yanlin}, working with the information theoretic counterparts offers certain advantages,
partly because of the similarity between
the proof techniques with certain converses in data transmission.
Implications of Gaussian optimality are discussed in Section~\ref{sec_gaussianOPT}.

Finally, we provide an information theoretic formulation of Barthe's reverse Brascamp-Lieb inequality (RBL) \cite{barthe1998reverse}.
In fact, we shall consider a generalization of RBL involving a multiple access channel (MAC) - pairing nicely with the broadcast channel in  the forward inequality.
From this formulation it is seen that the mysterious ``sup'' operation in RBL disappears when the MAC is a bijective mapping, which is the special case of reverse hypercontractivity considered by Kamath \cite{kamath_reverse}. Moreover, the strong data processing inequality is also a special case of generalized RBL when the MAC is a point to point channel.
Another direction of generalizing the reverse hypercontractivity, where the stochastic map is still a bijective mapping but the coefficients $(c_j)$ are allowed to be negative, has been recently considered by Beigi and Nair (see \cite{BN2015}). To our knowledge, their result does not imply ours, or vice versa.

Omitted proofs are given in \cite{lccv2015}.

\section{A General Duality Result}\label{sec_dual}

Given two probability measures
$P\ll Q$ on $\mathcal{X}$, define the \emph{relative information} as the log Radon-Nikodym derivative
\begin{align}
\imath_{P\|Q}(x):=\log\frac{{\rm d}P}{{\rm d}Q}(x)
\end{align}
The relative entropy is defined as
\begin{align}
D(P\|Q):=\mathbb{E}\left[\imath_{P\|Q}(X)\right]
\end{align}
where $X\sim P$, if $P\ll Q$, and infinity otherwise.

\begin{thm}\label{thm_1}
Fix $Q_X$, positive integer $m$, $d\in\mathbb{R}$, and $Q_{Y_j|X}$, measure $R_{Y_j}$ on $\mathcal{Y}_j$, $c_j\in(0,\infty)$ for each $j\in\{1,\dots,m\}$. Let $(X,Y_j)\sim Q_XQ_{Y_j|X}$.
Then the following statements are equivalent:
\begin{enumerate}
  \item For any non-negative measurable functions $f_j\colon\mathcal{Y}_j\to \mathbb{R}$,
      \begin{align}
      \mathbb{E}\left[\exp\left(\sum_{j=1}^m\mathbb{E}[\log f_j(Y_j)|X]-d\right)\right]
      &\le \prod_{j=1}^m\|f_j\|_{\frac{1}{c_j}}.\label{e_func}
      \end{align}
  \item For $P_X\ll Q_X$ and $P_X\to Q_{Y_j|X}\to P_{Y_j}$,
      \begin{align}
      D(P_X\|Q_X)+d\ge\sum_{j=1}^m c_jD(P_{Y_j}\|R_{Y_j}).
        \label{e_info}
      \end{align}
\end{enumerate}
\end{thm}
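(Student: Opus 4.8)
The plan is to reduce both statements to a single application of the Donsker--Varadhan (Gibbs) variational formula, exploiting the fact that the exponent in \eqref{e_func} is an expectation over the channel $Q_{Y_j|X}$. Writing $g(x):=\sum_{j=1}^m\mathbb{E}[\log f_j(Y_j)\,|\,X=x]$ and taking logarithms, statement~(1) is equivalent to requiring, for every admissible $(f_j)$,
\begin{align}
\log\mathbb{E}_{Q_X}\!\left[e^{g(X)}\right]-d\le\sum_{j=1}^m c_j\log\int f_j^{1/c_j}\,{\rm d}R_{Y_j}.\label{e_plan_log}
\end{align}
First I would apply the variational formula $\log\mathbb{E}_{Q_X}[e^{g(X)}]=\sup_{P_X\ll Q_X}\{\mathbb{E}_{P_X}[g(X)]-D(P_X\|Q_X)\}$ to the left-hand side.

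The crucial observation is that the Markov structure $P_X\to Q_{Y_j|X}\to P_{Y_j}$ linearizes the exponent: since $g$ is a sum of conditional expectations of functions of $Y_j$ alone, $\mathbb{E}_{P_X}[g(X)]=\sum_{j=1}^m\mathbb{E}_{P_{Y_j}}[\log f_j(Y_j)]$, where $P_{Y_j}$ is exactly the output marginal appearing in statement~(2). Thus \eqref{e_plan_log} holds for all $(f_j)$ if and only if, for every $P_X\ll Q_X$ and every $(f_j)$,
\begin{align}
\sum_{j=1}^m\left[\mathbb{E}_{P_{Y_j}}[\log f_j]-c_j\log\int f_j^{1/c_j}\,{\rm d}R_{Y_j}\right]\le d+D(P_X\|Q_X).\label{e_plan_inner}
\end{align}
Now I would apply the Donsker--Varadhan formula a second time, separately to each pair $(P_{Y_j},R_{Y_j})$: substituting $v_j=\tfrac{1}{c_j}\log f_j$ gives $\mathbb{E}_{P_{Y_j}}[\log f_j]-c_j\log\int f_j^{1/c_j}\,{\rm d}R_{Y_j}\le c_jD(P_{Y_j}\|R_{Y_j})$, with equality attained at the optimizer $v_j=\imath_{P_{Y_j}\|R_{Y_j}}$, i.e. $f_j=(\tfrac{{\rm d}P_{Y_j}}{{\rm d}R_{Y_j}})^{c_j}$. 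Consequently the supremum over $(f_j)$ of the left-hand side of \eqref{e_plan_inner}, for fixed $P_X$, equals $\sum_{j=1}^m c_jD(P_{Y_j}\|R_{Y_j})$.

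This closes both implications at once. For $(2)\Rightarrow(1)$, the one-sided inequality $\sum_j c_jD(P_{Y_j}\|R_{Y_j})\le d+D(P_X\|Q_X)$ from statement~(2) makes \eqref{e_plan_inner} hold for every $(f_j)$, hence \eqref{e_plan_log} and thus \eqref{e_func}. For $(1)\Rightarrow(2)$, fixing $P_X$ and inserting the extremal choice $f_j=(\tfrac{{\rm d}P_{Y_j}}{{\rm d}R_{Y_j}})^{c_j}$ into \eqref{e_plan_inner} recovers statement~(2) exactly. I expect the main obstacle to be the measure-theoretic bookkeeping rather than the variational skeleton: $R_{Y_j}$ is merely a (possibly infinite, non-probability) measure, so I must invoke the version of Donsker--Varadhan valid for $\sigma$-finite reference measures and justify that the supremum there is genuinely attained (or approached) by admissible nonnegative $f_j$. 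Care is also needed when the relative entropies or the norms $\|f_j\|_{1/c_j}$ are infinite, when $P_{Y_j}\not\ll R_{Y_j}$, or when the extremal $f_j$ fails to be integrable; these cases should be handled by truncation and a monotone approximation of $\imath_{P_{Y_j}\|R_{Y_j}}$, together with the convention that statement~(2) is vacuous whenever $D(P_X\|Q_X)=\infty$.
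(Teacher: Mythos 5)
Your proposal is correct and takes essentially the same route as the paper: the two applications of Donsker--Varadhan are precisely the paper's two uses of nonnegativity of relative entropy against tilted auxiliary measures (a correspondence the paper itself points out when crediting Carlen and Cordero-Erausquin), and your extremal choices $f_j=\left(\tfrac{{\rm d}P_{Y_j}}{{\rm d}R_{Y_j}}\right)^{c_j}$ and ${\rm d}P_X\propto e^{g}\,{\rm d}Q_X$ coincide exactly with the auxiliary distributions $S_X$, $P_X$ and $S_{Y_j}$ constructed in the paper's proof sketch. The only cosmetic difference is that you run both implications through one chain of variational identities, whereas the paper writes out the two tilted measures separately for each direction.
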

The special case where $Q_{Y_j|X}$, $j\in\{1,\dots,m\}$ are deterministic is established in \cite[Theorem~2.1]{carlen2009subadditivity}. We refer to \eqref{e_func} as a \emph{generalized Brascamp-Lieb like} (GBLL) inequality.

\begin{proof}[Proof Sketch]
The key idea of the proof is to define certain auxiliary distributions. Later we will reveal a nice symmetry with the auxiliary distributions used in the proof of the reverse inequality.

1)$\Rightarrow$2) Define an auxiliary measure $S_X$ via
  \begin{align}
  \imath_{S_X\|Q_X}(x):=-d_0+\sum_{j=1}^m c_j\mathbb{E}
  [\imath_{P_{Y_j}\|R_{Y_j}}(Y_j)|X=x]
  \nonumber
  \end{align}
  where $d_0$ is a normalization constant, and
    $
    f_j\leftarrow \left(\frac{{\rm d}P_{Y_j}}{{\rm d}R_{Y_j}}\right)^{c_j}
    $.
    Then 2) follows from 1) and the nonnegativity of  $D(P_X\|S_X)$.

2)$\Rightarrow$1) Define $P_X$ and $S_{Y_j}$ through
      \begin{align}
        \imath_{P_X\|Q_X}(x)=-d&-d_0+\mathbb{E}
        \left[\left.\sum_{j=1}^m\log f_j(Y_j)\right|{ X=x}\right]
        \label{e12}
        \\
  \imath_{S_{Y_j}\|R_{Y_j}}({ y_j})&:=\frac{1}{c_j}\log f_j({ y_j})-d_j,
  \end{align}
  where $d_j$'s are normalization constants. Then 2) follows from 1) and the nonnegativity of $D(P_{Y_j}\|S_{Y_j})$.
\end{proof}

\section{Notable Special Cases of Theorem~\ref{thm_1}}\label{sec_special}
Not only does Theorem~\ref{thm_1} admit a very simple proof but it
unifies the equivalent formulations of several functional inequalities and information theoretic inequalities,
and the approach applies to general alphabets, in contrast to some previous methods requiring finite alphabets (cf. \cite{ahlswede1976spreading,nair}).

\subsection{Variational Formula for R\'{e}nyi Divergence}
\label{sec_lpcb}
Suppose $R$, $Q$ and $T$ are probability measures on $(\mathcal{X},\mathscr{F})$, $R,Q\ll T$, $\alpha\in(0,1)\cup(1,\infty)$. Define $D_{\alpha}(Q\|R)$ as
\begin{align}
\frac{1}{\alpha-1}\log\left(\mathbb{E}\left[\exp\left(
\alpha\imath_{Q\|T}(\bar{X})+
(1-\alpha)\imath_{R\|T}(\bar{X})\right)\right]\right)\label{e18}
\end{align}
where $\bar{X}\sim T$.
\cite{atarRobust} showed that \eqref{e18} equals the supremum of
\begin{align}
&\frac{\alpha}{\alpha-1}\log\mathbb{E}[ \exp((\alpha-1)g(\hat{X}))]-
\log\mathbb{E}[\exp(\alpha g(X))]
\label{e_62}
\end{align}
over bounded nonnegative measurable $g$
such that \eqref{e_62} is well-defined,
where $X\sim {R}$ and $\hat{X}\sim{Q}$.
For $\alpha\in(1,\infty)$,
by setting $\exp(g(\cdot))$ as an indicator function of a set, one recovers the logarithmic probability comparison bound (LPCB) \cite{atar2014information}, which is useful in the error exponent analysis.

Now, a simple proof of \eqref{e_62} for $\alpha\in(1,\infty)$ can be obtained from Theorem~\ref{thm_1} by setting $m\leftarrow 1$, $Y_1=X$, $c\leftarrow \frac{\alpha-1}{\alpha}$ and $d=\frac{\alpha-1}{\alpha}D(P\|{R})$.
Note that \eqref{e_info} is then reduced to
\begin{align}
D(P\|Q)
+\frac{\alpha-1}{\alpha}D_{\alpha}({Q}\|{R})
\ge\frac{\alpha-1}{\alpha}D(P\|{R})
\label{e_64}
\end{align}
which is well-known and can be easily shown using the nonnegativity of relative entropy.
Meanwhile, setting $f\leftarrow \exp((\alpha-1)g)$ in
\eqref{e_func}, we see
\eqref{e_62}
is less than or equal to $D_{\alpha}(Q\|R)$. The equality is achieved when
\begin{align}
\frac{{\rm d}{Q}}{{\rm d}{R}}(x)=\frac{\exp(\alpha g(x))}{\mathbb{E}[\exp(\alpha g(X))]}.
\end{align}

\subsection{Strong Data Processing Constant}
A strong data processing inequality (SDPI) is an inequality of the form
\begin{align}
D(P_X\|Q_X)\ge c D(P_Y\|Q_Y),\quad \textrm{for all $P_X\ll Q_X$}
\label{e_sdp}
\end{align}
where $P_X\to Q_{Y|X}\to P_Y$, and we have fixed $Q_{XY}=Q_XQ_{Y|X}$ \cite{ahlswede1976spreading}\cite{csiszar2011information}\cite{anan_13}. The conventional data processing inequality corresponds to
$c=1$, so SDPI's generally specify $c>1$.
The study of the largest constant $c$ for \eqref{e_sdp} to hold can be traced back to Ahlswede and G\'{a}cs \cite{ahlswede1976spreading}, who
showed its equivalence to
\begin{align}
\mathbb{E}[\exp(\mathbb{E}[\log f(Y)|X])]\le \|f\|_{\frac{1}{c}},\quad\textrm{for all $f\ge0$}.
\label{e_func_sdpi}
\end{align}
The proof in \cite[Theorem~5]{ahlswede1976spreading}, which is based on a connection between SDPI and hypercontractivity, relies heavily on the finiteness of the alphabet, and is quite technical even in that case.
From Theorem~\ref{thm_1}, however,
it is straightforward to check that such equivalence holds for general alphabets.

\subsection{Hypercontractivity}
The BLL inequality also encompasses
\begin{align}
\mathbb{E}\left[f_1(Y_1)f_2(Y_2)\right]
&\le \|f_1\|_{p_1}
\|f_2\|_{p_2}
\label{e_sdp3}
\end{align}
where $p_1,p_2\in[0,\infty)$.
Using the method of types/typicality, it is shown in \cite{nair} that  \eqref{e_sdp3} is equivalent to
\begin{align}
D(P_{Y_1Y_2}\|Q_{Y_1Y_2})
\ge 
\tfrac{1}{p_1}D(P_{Y_1}\|Q_{Y_1})
+\tfrac{1}{p_2}D(P_{Y_2}\|Q_{Y_2})
\label{e25}
\end{align}
for all $P_{Y_1Y_2}\ll Q_{Y_1Y_2}$ in the case of finite alphabets.
The proof of Theorem~\ref{thm_1} based on nonnegativity of relative entropy establishes this equivalence for general alphabets,
and in particular allows one to prove Nelson's inequality for Gaussian hypercontractivity from \eqref{e25}; see the end of Section~\ref{sec_gaussianOPT}.

\subsection{Loomis-Whitney Inequality and Shearer's Lemma}
The combinatorial Loomis-Whitney inequality \cite[Theorem~2]{loomis1949} can be recovered from the following integral inequality: let $\mu$ be the counting measure on $\mathcal{A}^m$, then
\begin{align}
\int_{\mathcal{A}^m}\prod_{j=1}^m f_{j}(\pi_j(x))
{\rm d}\mu(x)
\le \prod_{j=1}^m \|f_{j}\|_{m-1}
\label{e_ilw}
\end{align}
for all nonnegative $f_j$'s, where
where $\pi_j$ is the projection operator deleting the $j$-th coordinate
and the norm on the right is with respect to the counting measure on $\mathcal{A}^{m-1}$.
This is an extension of the BLL inequality to counting measures\footnote{Theorem~\ref{thm_1} allows obvious extensions to nonnegative $\sigma$-finite measures.}, and by Theorem~\ref{thm_1} is equivalent to the entropy inequality known as Shearer's Lemma \cite{shearer}
\begin{align}
H(X_1,\dots,X_m)
\le \sum_{j=1}^m\frac{1}{m-1}
H(X_1^{j-1},X_{j+1}^{m}).
\end{align}

\section{Applications of the Information Theoretic Formulation}\label{sec_gaussian}
\subsection{Data Processing, Tensorization and Convexity}
The information theoretic formulation
in Theorem~\ref{thm_1} leads to simple proofs of basic and important properties of BLL inequalities.
Assuming $R_{Y_j}=Q_{Y_j}$ for simplicity, one has \cite{lccv2015}:
\begin{itemize}
  \item \emph{Data processing}: if $(Q_X,(Q_{Y_j|X}),d,c^m)$ is such that \eqref{e_info} holds, then by data processing for the relative entropy, $(Q_X,(Q_{Z_j|X}),d,c^m)$ also holds for any $(Q_{Z_j|Y_j})$, where $Q_{Y_j|X}\to Q_{Z_j|Y_j}\to Q_{Z_j|X}$. A similar property holds for processing the input.
  \item \emph{Tensorization}: if $(Q_X^{i},(Q_{Y_j|X}^{i}),d^{i},c^m)$, $i=1,2$ satisfies \eqref{e_info}, then $(Q_X^1\times Q_X^2,(Q_{Y_j|X}^1\times Q_{Y_j|X}^2),d^1+d^2,c^m)$ satisfies \eqref{e_info}. The proof is similar to standard converse proofs in information theory.
  \item \emph{Convexity}: if $(Q_X^{i},(Q_{Y_j|X}^{i}),d^{i},({c^i}_j))$ $i=1,2$ satisfies \eqref{e_info}, then for any $\theta\in(0,1)$, $(Q_X^{\theta},(Q_{Y_j|X}^{\theta}),d^{\theta},({c^{\theta}}_j))$ also satisfies \eqref{e_info} where
      \begin{align}
        d^{\theta}&:=(1-\theta)d^0+\theta d^1,
        \\
        c_j^{\theta}&:=(1-\theta)c_j^0+\theta c_j^1,\quad\forall j\in\{1,\dots,m\}.
        \end{align}
  This is equivalent to the Riesz-Thorin interpolation theorem in functional analysis in the case of non-negative kernels. The proof in the information theoretic setting is much simpler because the $c_j$'s only affect the right side of \eqref{e_info} as linear coefficients.
\end{itemize}
\subsection{Gaussian Optimality}\label{sec_gaussianOPT}
A less direct application is found in establishing Gaussian optimality for several information theoretic inequalities related to the BL inequalities.
Toward this end, assume for the remainder of the section that $\mathcal{X},\mathcal{Y}_1,\dots,\mathcal{Y}_m$
are Euclidean spaces of dimensions $n,n_1,\dots,n_m$, respectively, and
$Q_{\bf X}$ and $(Q_{{\bf Y}_j|\bf X})$ are Gaussian. To be precise about the notions of Gaussian optimality, we adopt terminology from \cite{bennett2008brascamp}:
\begin{itemize}
  \item \emph{Extremisability}: sup/inf is finitely attained.
  \item \emph{Gaussian extremisability}: sup/inf is finitely attained by Gaussian functions/distributions.
  \item \emph{Gaussian exhaustibility}: the value of sup/inf does not change when the arguments are restricted to Gaussian functions/distributions.
\end{itemize}
Most of the time, we prove Gaussian exhaustibility in general, while the more restrictive property of Gaussian extremisability is shown imposing non-degeneracy assumptions.

Fix $\mb{M}\succeq 0$,  positive constants $c_j$, and Gaussian random transformations $Q_{\mb{Y}_j|\mb{X}}$ for $j\in\{1,\dots,m\}$. Define
\begin{align}
F(P_{\mb{X}U}):= -h(\mb{X}|U) +\sum_{j=1}^m c_j h(\mb{Y}_j|U)+c_0 \Tr [\mb{M} \mb{\Sigma}_{\mb{X}|U}],\label{gFunc}
\end{align}
where $\mb{X}\sim P_{\mb{X}}$,  $P_{\mb{X}} \to Q_{\mb{Y}_j|\mb{X}}\to P_{\mb{Y}_j}$ and $\bsigma_{\mb{X}|U}:=\mathbb{E}[\Cov(\mb{X}|U)]$.

\begin{defn}\label{defn_ND}
We say $(Q_{\mb{Y}_1|\mb{X}},\dots,Q_{\mb{Y}_m|\mb{X}})$ is \emph{non-degenerate} if each $Q_{\mb{Y}_j|\mb{X=0}}$ is a $n_j$-dimensional Gaussian distribution with invertible covariance matrix.
\end{defn}

We have an \emph{extremisability} result under a covariance constraint:

\begin{thm}\label{thm:GaussEntropy}
If $(Q_{\mb{Y}_1|\mb{X}},\dots,Q_{\mb{Y}_m|\mb{X}})$ are non-degenerate, then $\inf_{P_{\mb{X}U}} \{ F(P_{\mb{X}U})   \colon  \bsigma_{\mb{X}|U}\preceq \mb{\Sigma} \}$ is finite and is attained by a Gaussian $\mb{X}$ and constant $U$.
\end{thm}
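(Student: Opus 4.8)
The plan is to establish the result in two stages: first that the infimum is finite and attained, by direct one-sided bounds plus a compactness argument, and then that every minimizer is forced to be Gaussian with degenerate $U$ by a rotation (``doubling'') argument in the spirit of Geng--Nair.

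For finiteness I would bound $F$ below on the feasible set using the two one-sided estimates the hypotheses supply. The maximum-entropy property together with $\bsigma_{\mb{X}|U}\preceq\mb{\Sigma}$ gives $h(\mb{X}|U)\le\half\log\bigl((2\pi e)^n\det\mb{\Sigma}\bigr)$, so $-h(\mb{X}|U)$ is bounded below. Non-degeneracy is exactly what controls the output terms from the other side: writing each channel as $\mb{Y}_j=\mb{A}_j\mb{X}+\mb{N}_j$ with $\mb{N}_j$ Gaussian of invertible covariance, one has $h(\mb{Y}_j|U)\ge h(\mb{Y}_j|\mb{X})=h(\mb{N}_j)>-\infty$. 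Since $c_0\Tr[\mb{M}\bsigma_{\mb{X}|U}]\ge0$, this yields $\inf F>-\infty$. Existence of a minimizer then follows from a (routine) compactness argument: differential entropies and the trace term are translation invariant, so one may center $\mb{X}$; the covariance constraint gives tightness; and the Gaussian smoothing built into the non-degenerate channels regularizes the output entropies, while $-h(\mb{X}|U)$ is handled under the second-moment bound, so that $F$ is lower semicontinuous and attains its infimum at some $P^*_{\mb{X}U}$.

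The heart is the doubling step. Take two i.i.d.\ copies $(\mb{X}_1,U_1),(\mb{X}_2,U_2)$ of $P^*_{\mb{X}U}$, set $\mb{X}_\pm=(\mb{X}_1\pm\mb{X}_2)/\sqrt2$, and write $\tilde U=(U_1,U_2)$. Because each channel is linear-plus-Gaussian and Gaussian noise is rotation invariant, both $(\mb{X}_+,\mb{Y}_{j,+})$ and $(\mb{X}_-,\mb{Y}_{j,-})$ again obey $Q_{\mb{Y}_j|\mb{X}}$, and $\bsigma_{\mb{X}_+|\tilde U}=\bsigma_{\mb{X}_-|\tilde U}=\bsigma_{\mb{X}|U}$. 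Using the chain rule, the entropy-preserving orthogonal change of variables, and the conditional independence $\mb{Y}_{j,-}\perp\mb{Y}_{j,+}\mid(\mb{X}_+,\tilde U)$ (valid since the two noise copies are independent), I would show
\begin{align}
2F(P^*_{\mb{X}U})\ge F(P_{\mb{X}_+\tilde U})+F(P_{\mb{X}_-W}),\quad W=(\tilde U,\mb{X}_+),\nonumber
\end{align}
where both configurations are feasible (for the second, $\bsigma_{\mb{X}_-|W}\preceq\bsigma_{\mb{X}_-|\tilde U}=\bsigma_{\mb{X}|U}\preceq\mb{\Sigma}$). By optimality each term on the right is at least $F(P^*_{\mb{X}U})$, so every inequality in the derivation is tight.

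Finally I would read off the equality conditions. The trace term (when $c_0\mb{M}$ is nontrivial) forces $\mathbb{E}[\Cov(\mb{X}_-|\mb{X}_+,\tilde U)]=\bsigma_{\mb{X}_-|\tilde U}$, and the output terms force $\mb{Y}_{j,-}\perp\mb{X}_+\mid(\mb{Y}_{j,+},\tilde U)$; invertibility of the noise covariances (non-degeneracy) is what lets one upgrade these to $\mb{X}_+\perp\mb{X}_-\mid\tilde U$. Conditioned on $\tilde U$, this says the sum and difference of two independent copies are independent, so by the Skitovich--Darmois/Ghurye--Olkin characterization the conditional laws $P^*_{\mb{X}|U=u}$ are Gaussian with a common covariance $\mb{K}^*\preceq\mb{\Sigma}$; since the entropies and the trace term depend only on this covariance, a constant $U$ with a centered Gaussian $\mb{X}$ of covariance $\mb{K}^*$ attains the same value, giving the claim. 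I expect the main obstacle to be precisely this last step: extracting full conditional independence (and hence Gaussianity via Skitovich--Darmois) from the equality conditions, since the output-entropy equalities are only strong enough to conclude after the non-degeneracy hypothesis is invoked to guarantee the requisite information flow through the channels.
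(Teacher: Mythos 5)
Your overall architecture is the paper's: double two independent copies of a minimizer, rotate to $\mb{X}^{\pm}=\tfrac{1}{\sqrt{2}}(\mb{X}^{(1)}\pm\mb{X}^{(2)})$, check that $P_{\mb{X}^+,\hat{U}}$ and $P_{\mb{X}^-,\hat{U}\mb{X}^+}$ remain feasible, sandwich with optimality to make every inequality tight, and finish with the Skitovich--Darmois characterization (Lemma~\ref{SDcharacterization}); your finiteness/existence sketch is also consistent with what the paper defers to \cite{lccv2015}. The genuine gap is exactly the step you flag yourself: passing from the equality conditions to $I(\mb{X}^+;\mb{X}^-|\hat{U})=0$. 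Note that the input-entropy terms contribute \emph{no slack} to the chain, since $h(\mb{X}^{(1)}|U^{(1)})+h(\mb{X}^{(2)}|U^{(2)})=h(\mb{X}^+|\hat{U})+h(\mb{X}^-|\mb{X}^+,\hat{U})$ is an exact identity (independence, orthogonality of the rotation, chain rule). Hence all slack sits in the output terms and the trace term, and tightness yields only $I(\mb{X}^+;\mb{Y}_j^-|\mb{Y}_j^+,\hat{U})=0$ for each $j$, plus (when $c_0\mb{M}\succ\mb{0}$) a covariance identity. Writing $Q_{\mb{Y}_j|\mb{X}}$ as $\mb{Y}_j=\mb{A}_j\mb{X}+\mb{N}_j$, the output condition can indeed be deconvolved (invertible noise covariance means a nonvanishing Gaussian characteristic function), but it only gives the \emph{projected, per-channel} statement that the conditional law of $\mb{A}_j\mb{X}^-$ given $(\mb{X}^+,\hat{U})$ does not depend on $\mb{X}^+$. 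This does not assemble into $\mb{X}^+\perp\mb{X}^-\mid\hat{U}$: independence of each projection $\mb{A}_j\mb{X}^-$ from $\mb{X}^+$ does not imply joint independence, and, worse, non-degeneracy in Definition~\ref{defn_ND} constrains only the noise covariance, not the linear part, so some $\mb{A}_j$ may be non-injective (in the extreme $\mb{A}_j=\mb{0}$, still ``non-degenerate''), in which case the corresponding condition is vacuous. The trace condition gives only mean-independence $\EE[\mb{X}^-|\mb{X}^+,\hat{U}]=\EE[\mb{X}^-|\hat{U}]$, and only when $c_0\mb{M}$ is strictly positive definite, whereas the theorem permits $\mb{M}\succeq\mb{0}$. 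Without full conditional independence, Skitovich--Darmois only shows that each image $\mb{A}_j\mb{X}|U$ is Gaussian, which does not control $h(\mb{X}|U)$ and so does not finish the proof.

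The paper closes precisely this hole with a different mechanism, Lemma~\ref{lem:invariance}: because $P_{\mb{X}^+,\hat{U}}$ and $P_{\mb{X}^-,\hat{U}\mb{X}^+}$ are themselves minimizers of $F$, and $F$ is a linear combination of the terms $-h(\mb{X}|U)$, $h(\mb{Y}_j|U)$, $\Tr[\mb{M}\bsigma_{\mb{X}|U}]$ with coefficients $(1,c_j,c_0)$, an envelope-theorem argument (concavity of the optimal value as a function of the coefficients) shows that for almost every coefficient vector \emph{each individual term} of $F$ takes the same value at every minimizer, not merely the sum. This yields $h(\mb{X}^+|\hat{U})=h(\mb{X}^-|\mb{X}^+,\hat{U})=h(\mb{X}^+|\mb{X}^-,\hat{U})$ and hence $I(\mb{X}^+;\mb{X}^-|\hat{U})=0$ directly from the input-entropy terms---no information needs to flow through the channels, no injectivity of the $\mb{A}_j$ is used, and the degenerate cases above are covered (the remaining coefficient vectors are handled by a limiting argument). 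As Remark~\ref{rem_added} notes, avoiding Lemma~\ref{lem:invariance} is possible in principle by exhaustively expanding two-letter entropies in several orders \`a la Geng--Nair, but your single-decomposition equality-condition route is not enough; to repair your proof you would need either that lemma or a restriction such as $m=1$ with $\mb{A}_1$ injective.
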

\begin{proof}[Proof Sketch]
Assume that both $P_{\mb{X}^{(1)}U^{(1)}}$ and $P_{\mb{X}^{(2)}U^{(2)}}$ are minimizers of \eqref{gFunc} subject to $ \bsigma_{\mb{X}|U}\preceq \bsigma$ (see \cite{lccv2015} for the proof of the existence of minimizer). Let
\begin{align}
(U^{(1)}, \mb{X}^{(1)}, \mb{Y}_1^{(1)}, \dots, \mb{X}_m^{(1)})
&\sim P_{\mb{X}^{(1)}U^{(1)}}Q_{\mb{Y}_1|\mb{X}}\dots Q_{\mb{Y}_m|\mb{X}}
\nonumber
\\
(U^{(2)}, \mb{X}^{(2)}, \mb{Y}_1^{(2)}, \dots, \mb{X}_m^{(2)})
&\sim P_{\mb{X}^{(2)}U^{(2)}}Q_{\mb{Y}_1|\mb{X}}\dots Q_{\mb{Y}_m|\mb{X}}
\nonumber
\end{align}
be mutually independent. Define
$
{ \mb X}^\pm = \frac{1}{\sqrt{2}}\left( \mb{X}^{(1)} \pm   \mb{X}^{(2)}\right)
$.
Define $\mb{Y}_j^{+}$ and $\mb{Y}_j^{-}$ similarly for $j=1, \dots, m$, and put $\hat{U}=(U^{(1)},U^{(2)})$.  We now observe that
\begin{enumerate}
\item Due to the Gaussian nature of $Q_{\mb{Y}_j|\mb{X}}$, $\mb{Y}_j^{+}|\{\mb{X}^+=\mb{x}^+,\mb{X}^-=\mb{x}^-,
    \hat{U}=u\}\sim Q_{\mb{Y}_j|\mb{X}=\mb{x}^+}$ is independent of $\mb{x}^-$.
    Thus $\mb{Y}_j^{+}|\{\mb{X}^+=\mb{x},
    \hat{U}=u\}\sim Q_{\mb{Y}_j|\mb{X}=\mb{x}}$ as well.
Similarly, $\mb{Y}_j^{-}|\{\mb{X}^-=\mb{x},\hat{U}=u\}\sim Q_{\mb{Y}_j|\mb{X}=\mb{x}}$.
\item $\bsigma_{\mb{X}^+|\uh},\,\bsigma_{\mb{X}^{-}|\mb{X}^+\hat{U}}\preceq  \bsigma_{\mb{X}^{-}|\hat{U}}$ so both $P_{\mb{X}^+,\hat{U}}$ and $P_{\mb{X}^-,\hat{U}\mb{X}^+}$ satisfy the covariance constraint.
\end{enumerate}
Using steps similar to the conventional converse proofs in information theory, one can show that
\begin{align}
\sum_{i=1}^2F(P_{\mb{X}^{(i)}U^{(i)}})
\geq F(P_{\mb{X}^+,\hat{U}})+F(P_{\mb{X}^-,\hat{U}\mb{X}^+}).
\label{e_chain}
\end{align}
But both $P_{\mb{X}^+,\uh}$ and $P_{\mb{X}^-,\uh \mb{X}^+}$ are candidate optimizers of \eqref{gFunc} subject to the given covariance constraint
whereas $P_{U^{(i)}\mb{X}^{(i)}}$ are the optimizers by assumption ($i=1,2$), so
\begin{align}
\max_i F(P_{\mb{X}^{(i)}U^{(i)}})\le \min\{F(P_{\mb{X}^+,\hat{U}}),F(P_{\mb{X}^-,\hat{U}\mb{X}^+})\},
\end{align}
which combined with \eqref{e_chain} implies that $F(\cdot)$ has the same value at $P_{\mb{X}^{(1)}U^{(1)}}$, $P_{\mb{X}^{(2)}U^{(2)}}$, $P_{\mb{X}^+,\hat{U}}$ and $P_{\mb{X}^-,\hat{U}\mb{X}^+}$. We now need the following basic observation to conclude that \emph{each term} in the linear combination in the definition of $F(\cdot)$ is also equal under those four distributions.
\begin{lem}\label{lem:invariance}
Let $p$ and $q$ be real-valued functions on an arbitrary set $\mathcal{D}$.  If
$f(t):= \min_{x\in \mathcal{D}} \{p(x) + t q(x)\}$
is attained for all $t\in(0,\infty)$,
then for almost all $t$, $f'(t)$ exists and equals
\begin{align}
q(x^{\star}) ~~~\forall x^{\star}\in \arg \min_{x\in \mathcal{D}} \{p(x) + t q(x)\}.
\end{align}
In particular, for all such $t$, $
q(x^{\star}) = q(\tilde{x}^{\star})$ and
$p(x^{\star}) = p(\tilde{x}^{\star})$ for all $x^{\star},\tilde{x}^{\star}\in \arg \min_{x\in \mathcal{D}} \{p(x) + t q(x)\}$.
\end{lem}
Geometrically $f(\cdot)$ can be viewed as the negative of the Legendre-Fenchel transformation\footnote{An extension of the Legendre-Fenchel transformation of a convex set.} of $\mathcal{S}:=\{(-q(x),p(x))\}_{x\in\mathcal{D}}$. Hence $f(\cdot)$ is concave, and the left and the right derivatives are determined by the two extreme points of the intersection between $\mathcal{S}$ and the supporting hyperplane.
See \cite{lccv2015} for a complete proof of Lemma~\ref{lem:invariance}.

By Lemma \ref{lem:invariance} and symmetry,
for almost all $(c_0,\dots,c_m)$,
\begin{align}
&h(\mb{X}^{+}|\uh) = h(\mb{X}^{-}|\mb{X}^{+},\uh) = h(\mb{X}^{+}|\mb{X}^{-},\uh)
\nonumber
\\
\Longrightarrow &I(\mb{X}^+;\mb{X}^-|\uh)=0.
\end{align}
The proof is completed by a strengthening of the Skitovic-Darmois characterization of normal distributions \cite{geng2014yanlin}:
\begin{lem}\label{SDcharacterization}
If $\mathbf{A}_1$ and $\mathbf{A}_2$ are mutually independent random vectors such that $\mathbf{A}_1+\mathbf{A}_2$ is independent of $\mathbf{A}_1-\mathbf{A}_2$, then  $\mathbf{A}_1$ and $\mathbf{A}_2$ are normally distributed with identical covariances.
\end{lem}
\end{proof}

\begin{rem}\label{rem_added}
New ingredients added to the Geng-Nair approach \cite{geng2014yanlin}\cite{nairextremal} for establishing Gaussian optimality include:
\begin{itemize}
\item Lemma~\ref{lem:invariance}, that is, by differentiating with respect to the linear coefficients, we can conveniently obtain information theoretic identities which helps us to conclude the conditional independence of $\mb{X}^+$ and $\mb{X}^-$ quickly.\footnote{The idea of differentiating the coefficients has been used in \cite{courtade2014extremal}.}
For small $m$, in principle, this may be avoided
by exhaustively enumerating the expansions of two-letter quantities (e.g.~as done in \cite{nairextremal}), but that approach becomes increasingly complicated and unstructured as $m$ increases.
\item A semicontinuity property is used in the proof of the existence of the minimizer (not discussed here, but see \cite{lccv2015}).
The continuity of differential entropy argument in \cite{geng2014yanlin}\cite{nairextremal} does not apply here for a sequence of weakly convergent $\mb{X}_n$, since their densities are not regularized by convolving with the Gaussian density.
\end{itemize}
\end{rem}

If we do not impose the non-degenerate assumption and the regularization $\bsigma_{\mb{X}|U}\preceq \mb{\Sigma}$, it is possible that the optimization in Theorem~\ref{thm:GaussEntropy} is nonfinite and/or not attained by any $P_{U\mb{X}}$. In this case, we can prove that the optimization is \emph{exhausted} by Gaussian distributions, by taking the limit in Theorem~\ref{thm:GaussEntropy} as the variance of the additive noise converges to zero.
To state the result conveniently, for any $P_{\mb{X}}$, define
\begin{align}
F_0(P_{\mb{X}}):=-h(\mb{X})+\sum_{j=1}^m c_j h(\mb{Y}_j)+c_0\Tr[\mb{M}\bsigma_{\mb{X}}],
\label{e_103}
\end{align}
where $(\mb{X},\mb{Y}_j)\sim P_{\mb{X}}Q_{\mb{Y}_j|\mb{X}}$.

\begin{thm}\label{thm:exhaustibility}
In the general (possibly degenerate) case,
For any given positive semidefinite $\mb{\Sigma}$,
\begin{align}
\inf_{P_{\mb{X}U}, \bsigma_{\mb{X}|U}\preceq \bsigma} F(P_{\mb{X}U})
=
\inf_{P_{\mb{X}}\textrm{ Gaussian}, \bsigma_{\mb{X}|U}\preceq \bsigma} F_0(P_{\mb{X}}).
\label{e_78}
\end{align}
The same holds when the covariance constraint is dropped.
\end{thm}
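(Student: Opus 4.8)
The plan is to reduce the degenerate case to the non-degenerate setting of Theorem~\ref{thm:GaussEntropy} by regularizing the channels with a vanishing amount of Gaussian noise, and then to pass to the limit. Concretely, for $\epsilon>0$ and each $j$ let $Q_{\mb{Y}_j|\mb{X}}^{(\epsilon)}$ be the Gaussian random transformation obtained by post-composing $Q_{\mb{Y}_j|\mb{X}}$ with the addition of an independent $\mathcal{N}(\mb{0},\epsilon\mb{I}_{n_j})$ vector. The conditional covariance of the output given $\mb{X}=\mb{0}$ is then that of $Q_{\mb{Y}_j|\mb{X}=\mb{0}}$ plus $\epsilon\mb{I}_{n_j}$, hence invertible, so $(Q_{\mb{Y}_1|\mb{X}}^{(\epsilon)},\dots,Q_{\mb{Y}_m|\mb{X}}^{(\epsilon)})$ is non-degenerate for every $\epsilon>0$. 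Write $F^{(\epsilon)}$ and $F_0^{(\epsilon)}$ for the functionals \eqref{gFunc} and \eqref{e_103} built from the perturbed channels; note that the perturbation does not affect $\bsigma_{\mb{X}|U}$, so the feasible sets are unchanged.

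Applying Theorem~\ref{thm:GaussEntropy} to the non-degenerate perturbed problem, the infimum of $F^{(\epsilon)}$ under $\bsigma_{\mb{X}|U}\preceq\bsigma$ is finite and attained by a Gaussian $\mb{X}$ with constant $U$. Since conditioning on a constant is vacuous, $F^{(\epsilon)}$ at such a point equals $F_0^{(\epsilon)}$ of the same Gaussian $\mb{X}$ (which satisfies $\bsigma_{\mb{X}}\preceq\bsigma$), and conversely every Gaussian $P_{\mb{X}}$ paired with a constant $U$ is admissible on the left; therefore
\begin{align}
\inf_{P_{\mb{X}U}\colon \bsigma_{\mb{X}|U}\preceq \bsigma} F^{(\epsilon)}(P_{\mb{X}U}) = \inf_{P_{\mb{X}}\textrm{ Gaussian}\colon \bsigma_{\mb{X}}\preceq \bsigma} F_0^{(\epsilon)}(P_{\mb{X}}) \label{eq_eps}
\end{align}
for every $\epsilon>0$. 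I would then let $\epsilon\downarrow 0$ in \eqref{eq_eps}. Because adding independent noise can only increase (conditional) differential entropy and each $c_j>0$, while the $-h(\mb{X}|U)$ and trace terms are unaffected, both $F^{(\epsilon)}$ and $F_0^{(\epsilon)}$ are nonincreasing as $\epsilon\downarrow 0$. Monotone convergence of differential entropy under vanishing Gaussian noise gives $h(\mb{Y}_j^{(\epsilon)}|U)\downarrow h(\mb{Y}_j|U)$ in $[-\infty,+\infty)$, whence $F^{(\epsilon)}\downarrow F$ and $F_0^{(\epsilon)}\downarrow F_0$ pointwise. Swapping the infimum over the distribution with the (monotone) infimum over $\epsilon$, which is always legitimate, yields
\begin{align}
\lim_{\epsilon \downarrow 0} \inf_{P_{\mb{X}U}} F^{(\epsilon)} = \inf_{P_{\mb{X}U}} \inf_{\epsilon>0} F^{(\epsilon)} = \inf_{P_{\mb{X}U}} F, \nonumber
\end{align}
and the analogous identity for the Gaussian-restricted right-hand side; taking limits in \eqref{eq_eps} gives \eqref{e_78}. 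For the unconstrained statement I would run the same argument with $\bsigma=t\mb{I}$ and let $t\to\infty$, both sides decreasing to their unconstrained values.

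The main obstacle is justifying the pointwise convergences $F^{(\epsilon)}\downarrow F$ and $F_0^{(\epsilon)}\downarrow F_0$ in the extended reals, since the differential entropies may be infinite and, a priori, $-h(\mb{X}|U)+\sum_j c_j h(\mb{Y}_j|U)$ could be an indeterminate $\infty-\infty$. The monotonicity itself is elementary (from $h(A+B\,|\,B)=h(A)$ for independent summands), but identifying the monotone limit with the \emph{unregularized} value requires the upper semicontinuity of differential entropy, equivalently the lower semicontinuity of relative entropy with respect to Lebesgue measure, together with the entropy-power lower bound $h(\mb{Y}_j^{(\epsilon)}|U)\ge h(\mb{Y}_j|U)$. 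Degenerate inputs or channels, where these quantities diverge to $-\infty$, must be treated separately, but cause no real difficulty once one checks that a flattening Gaussian component drives $h(\mb{Y}_j^{(\epsilon)}|U)\to-\infty$. A secondary point is verifying that $F$ is well-defined on the feasible set, for which the covariance constraint (bounding $h(\mb{X}|U)$ from above) and the Gaussian structure of the channels (bounding $h(\mb{Y}_j|U)$) are convenient.
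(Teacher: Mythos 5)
Your proposal is correct and follows essentially the same route as the paper, which proves Theorem~\ref{thm:exhaustibility} precisely ``by taking the limit in Theorem~\ref{thm:GaussEntropy} as the variance of the additive noise converges to zero'': you regularize the channels with $\mathcal{N}(\mb{0},\epsilon\mb{I})$ noise to restore non-degeneracy, invoke Theorem~\ref{thm:GaussEntropy}, and pass to the limit $\epsilon\downarrow 0$ by commuting infima and using monotonicity plus semicontinuity of (conditional) differential entropy. The analytic subtlety you flag---identifying the monotone limit with the unregularized value via lower semicontinuity of relative entropy---is exactly the technical content deferred by the paper to its full version.
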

Note that Theorem~\ref{thm:exhaustibility} reduces an infinite dimensional optimization problem to a finite dimensional one.
From
Theorem~\ref{thm:GaussEntropy}
and
Theorem~\ref{thm:exhaustibility}
one easily obtains  Gaussian optimality results in a related optimization problem involving mutual information; see \cite{lccv2015} for details.
We close the section by mentioning several implications of the Gaussian optimality results:
\begin{itemize}
\item Extension of BL to Gaussian transformations: when $Q_{\bf X}$ and $(Q_{{\bf Y}_j|\bf X})$ are Gaussian, \eqref{e_func} holds if and only if it holds for all Gaussian functions $(f_j)$.
\item Multivariate Gaussian hypercontractivity: we say an $m$-tuple of random variables $(X_1,\dots,X_m)\sim Q_{X^m}$ is $(p_1,\dots,p_m)$-hypercontractive for $p_j\in[1,\infty]$ if
\begin{align}\label{hyper1}
\mathbb{E}\left[\prod_{j=1}^m f_j(X_j)\right]\le \prod_{j=1}^m\|f_j(X_j)\|_{p_j}
\end{align}
for all bounded measurable $(f_j)$. Suppose $Q_{X^m}=\mathcal{N}(\mb{0},\mb{\Sigma})$ where $\bf \Sigma$ is a positive semidefinite matrix whose diagonal values are all $1$.
By choosing an appropriate $\bf M$ in \eqref{e_103}, one sees that Theorem~\ref{thm:exhaustibility} continues to hold if the differential entropies are replaced by relative entropies with Gaussian reference measures.
Then by the dual formulation of hypercontractivity,
$(p_j)$ is hypercontractive if
and only if a certain matrix inequality is satisfied, which can be simplified \cite{lccv2015} to the following condition:
\begin{align}\label{e87}
\mb{P}\succcurlyeq\mb{\Sigma},
\end{align}
that is, $\mb{P}-\mb{\Sigma}$ is positive semidefinite,
where $\mb{P}$ is a diagonal matrix with $(p_j)$ on its diagonal. The $m=2$ case is  Nelson's hypercontractivity theorem.
\item The rate region for certain network information theory problems can be solved by finite dimensional matrix optimizations. For example, the multivariate Wyner common information \cite{xu2013wyner} of $m$ Gaussian scalar random variables $X_1,\dots,X_m$ with covariance matrix $\bsigma\succ\mb{0}$ is given by
    \begin{align}
    \frac{1}{2}\inf_{\bf\Lambda}\log\frac{|\bsigma|}{|\bf\Lambda|}
    \label{e137}
    \end{align}
    where the infimum is over all diagonal matrices $\bf \Lambda$ satisfying $\bf \Lambda\preceq\Sigma$. Previously, an estimation theoretic argument \cite[Corollary~1]{xu2013wyner} only establishes the Gaussian optimality of the auxiliary provided that $\bsigma$ satisfies certain conditions.
    Other examples include
    one communicator common randomness generation
    and omniscient helper key generation \cite{liu2015key}.
\end{itemize}

\section{Duality Result for the Reverse Inequality}\label{sec_rev}
In this section we give a dual to Theorem~\ref{thm_1}.
Here we state and sketch proof for finite $\mathcal{X}^m$ - an assumption used in showing a ``splitting'' property of relative information.
Extension to more general alphabets is treated in \cite{lccv2015}.
\begin{thm}\label{thm_rv}
Fix $Q_{Y|X^m}$, $(Q_{X_j})$, $R_Y$ and $d\in\mathbb{R}$. Assume $|\mathcal{X}^m|<\infty$ and $Q_Y\ll\mu$ where $\prod_j Q_{X_j}\to Q_{Y|X^m}\to Q_Y$. The following two statements are equivalent:
\begin{enumerate}
  \item For any $f_j\colon \mathcal{X}_j\to [0,+\infty)$, if $F\colon \mathcal{Y}\to[0,+\infty)$ is such that
      $
      \mathbb{E}[\log F(Y)|X^m=x^m]
      \ge \sum_j c_j\log f_j(x_j)
      $,
      $\prod_j Q_{X_j}$-almost surely, then
      \begin{align}
      \int F{\rm d}R_Y
      \ge \exp(d)\prod_j\left(\int f_j{\rm d}Q_{X_j}\right)^{c_j}.
      \label{e_func_rv}
      \end{align}
  \item For any $(P_{X_j})$, there exists a coupling $P_{X^m}$ 
      such that
      \begin{align}
      D(P_Y\|R_Y)+d\le \sum_j c_j D(P_{X_j}\|Q_{X_j})\label{e_3}
      \end{align}
      where $P_{X^m}\to Q_{Y|X^m}\to P_Y$.
\end{enumerate}
\end{thm}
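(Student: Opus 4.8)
The plan is to mirror the proof of Theorem~\ref{thm_1}, exploiting exactly the symmetry the authors promise: since the channel now runs $X^m\to Y$, the auxiliary distributions migrate to the \emph{output} side, and both directions hinge on the nonnegativity of a relative entropy of the form $D(P_Y\|S_Y)$. The crucial asymmetry between the two implications is that statement 2) asserts the \emph{existence} of a coupling; so in the direction $2)\Rightarrow1)$ I may simply use the coupling handed to me, whereas in $1)\Rightarrow2)$ I must \emph{construct} one, and that is where finiteness of $\mathcal{X}^m$ enters through the ``splitting'' of relative information.

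I would dispatch $2)\Rightarrow1)$ first, as it parallels the $2)\Rightarrow1)$ half of Theorem~\ref{thm_1}. Given an admissible pair $(f_j,F)$, tilt the reference measures: set $\frac{{\rm d}P_{X_j}}{{\rm d}Q_{X_j}}=f_j/\int f_j\,{\rm d}Q_{X_j}$ and $\frac{{\rm d}S_Y}{{\rm d}R_Y}=F/\int F\,{\rm d}R_Y$, handling the degenerate cases $\int f_j\,{\rm d}Q_{X_j}\in\{0,\infty\}$ or $\int F\,{\rm d}R_Y\in\{0,\infty\}$ separately (there \eqref{e_func_rv} is immediate). Invoking 2) for these $(P_{X_j})$ produces a coupling $P_{X^m}$ with $P_{X^m}\to Q_{Y|X^m}\to P_Y$ satisfying \eqref{e_3}. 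Expanding the nonnegative quantity gives $0\le D(P_Y\|S_Y)=D(P_Y\|R_Y)-\mathbb{E}_{P_Y}[\log F]+\log\int F\,{\rm d}R_Y$, while the almost-sure hypothesis of 1) pushed through the channel yields $\mathbb{E}_{P_Y}[\log F]=\mathbb{E}_{P_{X^m}}[\mathbb{E}[\log F(Y)|X^m]]\ge\sum_j c_j\,\mathbb{E}_{P_{X_j}}[\log f_j]=\sum_j c_j\big(D(P_{X_j}\|Q_{X_j})+\log\int f_j\,{\rm d}Q_{X_j}\big)$. Substituting \eqref{e_3} and cancelling the common $\sum_j c_j D(P_{X_j}\|Q_{X_j})$ collapses everything to $\log\int F\,{\rm d}R_Y\ge d+\sum_j c_j\log\int f_j\,{\rm d}Q_{X_j}$, i.e.\ \eqref{e_func_rv}.

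For the harder direction $1)\Rightarrow2)$ I must exhibit the coupling. Fixing $(P_{X_j})$, the target is $\inf_{P_{X^m}}D(P_Y\|R_Y)\le\sum_j c_j D(P_{X_j}\|Q_{X_j})-d$, the infimum ranging over couplings of $(P_{X_j})$, with the minimizer serving as the asserted coupling. I would write $D(P_Y\|R_Y)=\sup_g\{\mathbb{E}_{P_Y}[g]-\log\int e^g\,{\rm d}R_Y\}$ by Donsker--Varadhan; since $\mathbb{E}_{P_Y}[g]=\mathbb{E}_{P_{X^m}}[\mathbb{E}[g(Y)|X^m]]$ is affine in $P_{X^m}$ and the functional is concave in $g$, Sion's minimax theorem lets me swap the $\inf$ over the (convex, compact) set of couplings with the $\sup$ over $g$ --- this is the first point at which $|\mathcal{X}^m|<\infty$ is indispensable. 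The inner problem $\inf_{P_{X^m}}\mathbb{E}_{P_{X^m}}[\phi_g(X^m)]$ with $\phi_g(x^m):=\mathbb{E}[g(Y)|X^m=x^m]$ is a multimarginal optimal transport, and Kantorovich duality (no gap on a finite space) rewrites it as $\sup\{\sum_j\mathbb{E}_{P_{X_j}}[\psi_j]\colon\sum_j\psi_j(x_j)\le\phi_g(x^m)\}$; this dual representation is precisely the ``splitting of relative information'' the finiteness assumption underwrites.

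To close, for each feasible $(g,(\psi_j))$ I set $F:=e^g$ and $f_j:=e^{\psi_j/c_j}$, so that the transport constraint $\sum_j\psi_j\le\mathbb{E}[g(Y)|X^m]$ becomes verbatim the hypothesis of 1); hence 1) supplies $\log\int e^g\,{\rm d}R_Y\ge d+\sum_j c_j\log\int e^{\psi_j/c_j}\,{\rm d}Q_{X_j}$, while the Gibbs variational inequality supplies $\mathbb{E}_{P_{X_j}}[\psi_j]-c_j\log\int e^{\psi_j/c_j}\,{\rm d}Q_{X_j}\le c_j D(P_{X_j}\|Q_{X_j})$. Summing the latter over $j$ and subtracting the former bounds the objective by $\sum_j c_j D(P_{X_j}\|Q_{X_j})-d$ for every feasible pair; taking the supremum and using attainment of the infimum over couplings (finiteness again) delivers the coupling. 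I expect the main obstacle to be exactly this direction: rigorously justifying the minimax interchange, the no-gap transport representation, and the existence of the optimal coupling, all of which lean on $|\mathcal{X}^m|<\infty$ (the general-alphabet treatment in \cite{lccv2015} presumably replaces them by approximation/selection); a secondary nuisance is the bookkeeping for degenerate $f_j,F$ and the absolute-continuity requirement $P_Y\ll R_Y$.
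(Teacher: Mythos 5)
Your proposal is correct, and its two halves relate to the paper's proof differently. For $2)\Rightarrow 1)$ you do exactly what the paper does: tilt $Q_{X_j}$ by $f_j$ and $R_Y$ by $F$ to form $(P_{X_j})$ and $S_Y$, invoke 2) to obtain the coupling, and collapse everything via the nonnegativity of $D(P_Y\|S_Y)$; this is the paper's argument verbatim. For the key direction $1)\Rightarrow 2)$, however, you take a genuinely different route. The paper works directly with the nonlinear convex program $\min_{P_{X^m}} D(P_Y\|R_Y)$ over couplings: it uses lower semicontinuity to get a minimizer, then KKT/complementary slackness to extract the ``splitting'' property \eqref{e_split} --- functions $g_j$ with $\mathbb{E}[\imath_{P_Y\|R_Y}(Y)|X^m=x^m]\ge \sum_j c_j g_j(x_j)$ everywhere, with equality $P_{X^m}$-a.s. --- and then applies statement 1) \emph{once}, to the extremal pair $F=\frac{{\rm d}P_Y}{{\rm d}R_Y}$, $f_j=\exp(g_j)$, finishing with $D(P_{X_j}\|S_{X_j})\ge 0$. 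You instead linearize $D(P_Y\|R_Y)$ via Donsker--Varadhan, swap $\inf$ and $\sup$ by Sion's theorem (compactness of the coupling simplex), LP-dualize the resulting multimarginal transport problem $\inf_P \mathbb{E}_P[\phi_g]$, and apply statement 1) to the whole family of feasible pairs $(e^g, e^{\psi_j/c_j})$, closing with the Gibbs inequality and attainment of the infimum. Both arguments lean on $|\mathcal{X}^m|<\infty$ and are at bottom the same convex duality, but the mechanics differ: the paper's version identifies the actual extremizing $F$ (useful for the symmetry with the forward proof and for equality conditions), while yours trades the ``standard exercise'' of verifying complementary slackness for off-the-shelf minimax and LP duality, and makes the three distinct uses of finiteness (compactness for Sion, no-gap LP duality, attainment) more transparent. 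One caveat to flag: your Donsker--Varadhan step, with the supremum over bounded $g$, implicitly requires $R_Y$ to be a finite (essentially probability) measure --- if $R_Y$ is an infinite $\sigma$-finite reference measure, as one wants when recovering Barthe's reverse Brascamp--Lieb inequality with Lebesgue reference, then $\log\int e^g\,{\rm d}R_Y=+\infty$ for every bounded $g$ and the formula degenerates, whereas the paper's tilting/KKT argument is indifferent to this; also, your handling of the degenerate cases ($\int f_j\,{\rm d}Q_{X_j}$ or $\int F\,{\rm d}R_Y$ equal to $0$ or $\infty$) is glossed at about the same level as the paper's, and can be patched by truncation and the absolute-continuity hypothesis.
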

Notice that compared to Theorem~\ref{thm_1}, the inequality signs in \eqref{e_func_rv} and \eqref{e_3} are reversed, and the computation of the best constant $d$ involves an extra optimization (over $F$ or $P_{X^m}$).
\begin{proof}[Proof Sketch]
We consider $d=0$ only as the general case can be handled similarly.
1)$\Rightarrow$2)
  This is the nontrivial direction which uses the finiteness of $|\mathcal{X}^m|$.
  Given $(P_{X_j})$, suppose $P_{X^m}$ is a coupling that
  minimizes $D(P_Y\|R_Y)$ (which exists because $D(P_Y\|R_Y)$ is lower semicontinuous in $P_{X^m}$).
  It is a standard exercise using KKT conditions to show an important splitting property:
  there exist $g_j\colon\mathcal{X}_j\to\mathbb{R}$ such that
  \begin{align}
  \mathbb{E}[\imath_{P_Y\|R_Y}(Y)|X^m=x^m]\ge\sum_j c_j g_j(x_j)
  \label{e_split}
  \end{align}
  for all $x^m$, and the equality holds $P_{X^m}$-almost surely.\footnote{This property is related (but not equivalent) to a property of $I$-projections onto linear sets discussed by Csisz\'{a}r \cite[Section~3]{csiszar1975divergence}; see \cite{lccv2015} for a discussion.}
  The latter claim follows by applying complementary slackness to the nonnegativity constraint on $P_{X^m}$.
  Now define $S_{X_j}$ by
  \begin{align}
  \imath_{S_{X_j}\|Q_{X_j}}(x_j)=-d_j+g_j(x_j)
  \end{align}
  where $d_j$'s are normalization constants.
  Applying 1) with $F\leftarrow \frac{{\rm d}P_Y}{{\rm d}R_Y}$, $f_j\leftarrow \exp(g_j)$,
  and using the fact that $D(P_{X_j}\|S_{X_j})\ge 0$, we obtain 2) upon rearranging.

2)$\Rightarrow$1) Given $F$, $(f_j)$, define $S_Y$, $(P_{X_j})$ by
      \begin{align}
      \imath_{S_Y\|R_Y}(y)&=-d_0+\log F(y);
      \\
      \imath_{P_{X_j}\|Q_{X_j}}(x_j)&=-d_j+\log f_j(x_j),
      \end{align}
      where $d_0$,\dots,$d_m$ are normalization constants.
      The assumption $Q_Y\ll\mu$ guarantees that $S_Y$ and $(P_{X_j})$ are well-defined except for the trivial case where $F$ and some $f_j$ are zero almost surely.
      Then choose the $P_Y$ such that \eqref{e_3} holds.
     Finally 1) follows from 2) and the nonnegativity of $D(P_Y\|S_Y)$.
\end{proof}
\begin{rem}
Once the finiteness assumption on $\mathcal{X}^m$ is dropped (see \cite{lccv2015}), we can recover Barthe's formulation of the reverse Brascamp-Lieb inequality \cite{barthe1998reverse} from \eqref{e_func_rv}: when $Q_{Y|X^m}$ is deterministic given by $\phi\colon \mathcal{X}^m\to\mathcal{Y}$, then the first statement in Theorem~\ref{thm_rv} holds if and only if it holds for
\begin{align}
F(y)
:= \sup_{x^m\colon\phi(x^m)=y}\prod_j f_j^{c_j}(x_j),
\quad\forall y.
\label{e_rv_f}
\end{align}
The RBL is recovered by letting $\phi(\cdot)$ be a linear function.
\end{rem}
\begin{rem}
Both the forward and reverse duality theorems recover the strong data processing inequality when $m=1$ (that is, the MAC or the broadcast channel is a point-to-point channel). No meaningful version of reverse strong data processing is immediately apparent; the naive candidate which simply reverses the inequality sign doesn't even tensorize \cite{Liu}.
\end{rem}

\bibliographystyle{ieeetr}
\bibliography{ref_om}
\end{document}